\documentclass[letterpaper]{llncs}

\usepackage{amsmath,amssymb,amsfonts}
\usepackage{tgpagella,eulervm}
\usepackage{varioref}
\usepackage{enumerate}

\newcommand{\es}{\varnothing}

\DeclareMathOperator{\mex}{\mathsf{mex}}
\DeclareMathOperator{\nimsum}{\mathsf{nim-sum}}

\begin{document}

\title{{\sc $P_3$-Games on Chordal Bipartite Graphs}}

\author{
Wing-Kai~Hon\inst{1}
\and 
Ton~Kloks\inst{}
\and 
Fu-Hong~Liu\inst{1}
\and 
Hsiang-Hsuan~Liu\inst{1,2}
\and
Tao-Ming~Wang\inst{3}
\and 
Yue-Li~Wang\inst{4}
}

\institute{National Tsing Hua University, Hsinchu, Taiwan\\
{\tt (wkhon,fhliu,hhliu)@cs.nthu.edu.tw}
\and
University of Liverpool, Liverpool, United Kingdom\\
{\tt hhliu@liverpool.ac.uk}
\and
Tunghai University, Taichung, Taiwan\\
{\tt wang@go.thu.edu.tw}
\and
National Taiwan University of Science and Technology\\
{\tt ylwang@cs.ntust.edu.tw}
}

\maketitle 

\begin{abstract}
Let $G=(V,E)$ be a connected graph. 
A set $U \subseteq V$ is convex if $G[U]$ 
is connected and 
all vertices of $V \setminus U$ have at most one neighbor in $U$. 
Let $\sigma(W)$ denote the unique  smallest convex set that contains 
$W \subseteq V$. 

Two players play the following game. Consider a convex set $U$ and call it 
the `playground.' 
Initially, $U = \es$. 
When $U=V$, the player to move loses the game. Otherwise, that 
player chooses a vertex $x \in V \setminus U$ which is 
at distance at most two 
from $U$. The effect of the move is that the playground $U$ changes into 
$\sigma(U \cup \{x\})$ and the opponent is presented with this 
new playground. 

A graph is chordal bipartite if it is bipartite 
and has no induced cycle of 
length more than four. 
In this paper we show that, when $G$ is 
chordal bipartite, there is a polynomial-time algorithm that computes  
the Grundy number of the $P_3$-game played on $G$. 
This implies that there is an efficient algorithm to decide 
whether the first player has a winning strategy. 
\end{abstract}

\section{Introduction}

The $P_3$-convexity in graphs was introduced in~\cite{kn:centeno}. 

\begin{definition}
Let $G$ be a connected graph. 
A set $U$ is \underline{convex} if 
\[ \boxed{\text{\rm $G[U]$ is connected} \quad\text{and}\quad 
\forall_{x\in V \setminus U} \; |N(x) \cap U| \leq 1. }\] 
\end{definition}
Notice that the intersection of any two convex sets is convex. 
Since $\es$ and $V(G)$ are also convex, 
it follows that this convexity is an alignment. 
Let $\mathcal{L}=\mathcal{L}(G)$ denote 
the collection of convex sets in $G$. 

\bigskip 

The \underline{convex closure} of a set 
$W \subseteq V$ is defined as the smallest convex 
set that contains $W$, that is, 
\[ \sigma(W)= \bigcap \;\{\;U\;|\; W \subseteq U \quad \text{and}\quad 
U \in \mathcal{L} \;\}.\] 

\bigskip 

We introduced the \underline{$P_3$-game} in~\cite{kn:hon}. It is played as follows. 
Let $G$ be a connected graph. When it is his turn, each of the two players 
is presented with a playground. The playground is a convex set $U$ in $G$. 
Initially, $U = \es$. When it is a player's turn to move,  
he loses the game if $U=V$. Otherwise, he selects a 
vertex $x \in V\setminus U$ at distance at most two from $U$. The effect 
of the move is that the playground changes into $\sigma(U \cup \{x\})$, and 
it is now the opponent's turn to move. 

\bigskip 

In this paper we show that, when $G$ is chordal bipartite, there exists an efficient  
algorithm that decides 
whether the player who is first to move has a winning strategy. 

\section{Preliminaries}

\subsection{Chordal bipartite graphs}

Golumbic and Goss launched the studies on chordal bipartite graphs. 
They defined the class of graphs as follows. 

\begin{definition}
A graph is \underline{chordal bipartite} if it is bipartite and has 
no induced cycles of length more than 4. 
\end{definition}

Several characterizations of this class of graphs are available. We refer 
to~\cite{kn:kloks2},  which contains  a short survey. 
In this paper also  appears the following 
lemma~\cite[Lemma~2 and Corollary~1]{kn:kloks2}. 

\begin{lemma}
\label{lm cb}
Let $G=(A,B,E)$ be chordal bipartite, let $S$ be a minimal separator in $G$, 
and let 
$C$ be a component that is close to $S$, that is, 
$C$ is a component of $G-S$ and $N(C)=S$. Then 
\begin{enumerate}[\rm (I)]
\item $G[S]$ is complete bipartite (possibly an independent set),
If $S \cap A \neq \es$, then there exists a vertex $x \in C$ satisfying  
\[N(x) \cap S=S \cap A,\]
\item if $S \cap A \neq \es$ and $S \cap B \neq \es$, there exist 
two adjacent vertices $x$ and $y$ satisfying 
\[N(x) \cap S=S \cap A \quad\text{and}\quad N(y) \cap S=S \cap B.\]
\end{enumerate}
\end{lemma}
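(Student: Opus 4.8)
The whole argument runs on one device. Since $S$ is a minimal separator, $G-S$ has, besides $C$, at least one further component $D$ with $N(D)=S$; I fix such a $D$. From any purported violation of the lemma I will produce two internally disjoint induced paths with the same two endpoints, one having its interior inside $C$ and the other inside $D$. As $G-S$ carries no edge between distinct components, the union of the two paths is an induced cycle, and I will always be able to force its length to be at least $6$ using only bipartiteness: two vertices on the same side of $(A,B)$ lie at even distance and two on opposite sides at odd distance. Since $G$ has no induced cycle of length more than four, that is the contradiction I am after.

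First I would prove the first assertion of (I). If $a\in S\cap A$ and $b\in S\cap B$ were nonadjacent, then, lying on opposite sides, they would be joined by a shortest path with interior in $C$ of odd length $\geq 3$ and likewise by one with interior in $D$; gluing gives an induced cycle of length $\geq 6$. Hence $G[S]$ is complete bipartite (an independent set when one side of $S$ is empty). The same device gives the \emph{local} form of the other assertions: any two vertices of $S\cap A$ have a common neighbour in $C$ --- otherwise a shortest path between them through $C$ has even length $\geq 4$, one through $D$ has even length $\geq 2$, and their union is once more an induced cycle of length $\geq 6$ --- and symmetrically for $S\cap B$.

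The real work is in globalising. For the second assertion of (I) I need a single $x\in C$ adjacent to \emph{all} of $S\cap A$, not merely to each pair. The plan is to isolate the purely bipartite fact ``in a chordal bipartite graph in which every two vertices of one side have a common neighbour, some vertex of the other side is adjacent to all of that side'' and prove it by strong induction on the size of the side: in a minimal counterexample, the induction hypothesis yields a vertex dominating each $(n-1)$-subset, and three of these, together with three vertices of the side, span an explicit induced $6$-cycle. Applying this to $S\cap A$ and its neighbours in $C$ produces the required $x$, and the $B$-side version produces a $y\in C$ adjacent to all of $S\cap B$.

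Finally, for (II) with $S\cap A,S\cap B\neq\es$, let $\mathcal D_A,\mathcal D_B\subseteq C$ be the (nonempty, by the above) sets of vertices of $C$ dominating $S\cap A$, resp.\ $S\cap B$; they lie on opposite sides of $(A,B)$. If no vertex of $\mathcal D_A$ were adjacent to one of $\mathcal D_B$, I would take $x\in\mathcal D_A$, $y\in\mathcal D_B$ at minimum distance in $C$ (this distance odd and $\geq 3$), a shortest $x$--$y$ path $\pi$ in $C$, and an edge $ab$ of $G[S]$ (available by (I)), and close the walk $x,a,b,y$ with $\pi$ reversed into a cycle of length $\geq 6$. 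Its only possible chords run from $a$ to an interior vertex of $\pi$ on the $B$-side, or from $b$ to one on the $A$-side, and choosing $a\in S\cap A$ and $b\in S\cap B$ suitably removes all of them --- unless the $B$-side interior of $\pi$ already dominates $S\cap A$ (or the $A$-side interior dominates $S\cap B$). That residual case I would kill with one further short cycle: the interior vertex closest to $x$ witnessing such domination, closed back through $x$ and a vertex of $S$, spans an induced cycle of length $\geq 6$ once the minimality of the chosen distance is used to exclude the short initial segments. The engine here is routine; the genuinely delicate part, and the main obstacle, is precisely this chord bookkeeping in the two globalising steps --- arranging that the cycle one writes down is actually chordless.
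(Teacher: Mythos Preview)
The paper does not actually prove this lemma: it is quoted verbatim from \cite{kn:kloks2} (their Lemma~2 and Corollary~1), so there is no in-paper argument to compare yours against. What the paper adds after the statement is only the remark tying (II) to the Golumbic--Goss characterisation via edge-separators.

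That said, your argument is sound and is the standard route one would expect. The two-paths-through-two-full-components device cleanly gives that $G[S]$ is complete bipartite and that any two vertices of $S\cap A$ share a neighbour in $C$. Your globalisation step is correct as stated: if every pair in a finite set $X$ on one side of a chordal bipartite graph has a common neighbour but no single vertex dominates $X$, then dominators $x_1,x_2,x_3$ of three $(|X|-1)$-subsets together with $a_1,a_2,a_3$ form an induced $C_6$ (the $x_i$ are forced to be distinct, and bipartiteness rules out the diagonal chords). Applied to $S\cap A$ versus its neighbours in $C$ this yields the required $x$, and that induced subgraph is chordal bipartite because the class is hereditary.

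For (II) your plan also goes through, but the ``residual case'' deserves one more line, since as written it is the only place a reader might balk. With $\pi=x=u_0,u_1,\dots,u_\ell=y$ and $\ell\geq 3$ odd, minimality of $d_C(x,y)$ forces $u_2\notin\mathcal D_A$, so there is $a^\ast\in S\cap A$ with $u_2\not\sim a^\ast$. In the residual case $a^\ast$ is nevertheless adjacent to some $u_{2k}$ with $k\geq 2$; taking the smallest such $k$ gives the induced cycle $a^\ast,x,u_1,\dots,u_{2k},a^\ast$ of length $2k+2\geq 6$ (chords from $a^\ast$ to earlier even-indexed $u_i$ are excluded by the choice of $k$, and all others by bipartiteness or by $\pi$ being geodesic). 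The $b$-side is symmetric via $y$. Once this is spelled out, the whole proof is complete.
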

This implies that each color class of a  minimal separator 
is the common neighborhood of two nonadjacent vertices. 
If the separator has vertices in both color classes, 
then there exists a $2K_2$ such that each color class 
of $S$ is the common neighborhood of one of the two 
nonadjacent pairs, that are in the same color class of $G$,  
and that are in the $2K_2$. 
Golumbic and Goss show that this condition on the separator 
with vertices in both color classes, characterizes chordal 
bipartite graphs; namely, a graph is chordal bipartite 
if and only if 
every minimal edge-separator (separating two edges of $G$) 
is complete bipartite. 

\subsection{Game graphs and Grundy numbers} 

The game graph, for the $P_3$-game defined above on a chordal 
bipartite graph $G$, is a 
directed acyclic graph $P$, whose vertices are the playgrounds. 
There is an arc from a playground $A$ to a playground $B$ if 
$B$ can be reached from $A$ within one move. 

\bigskip 

The game graph $P$ is labeled as follows. 
The unique sink-node, $V$, is labeled $0$. Recursively, let 
$A$ be an unlabeled playground for which all outgoing neighbors 
are labeled. Then the label of $A$ is the $\mex$-value of 
its successors. 

\begin{definition}
\label{df mex}
The \underline{$\mex$-value} of a set of 
nonnegative integers is the 
smallest nonnegative integer that is not in the set.
\end{definition}

\begin{definition}
The \underline{Grundy value} of $G$ is the label of $\es$ 
in the game graph $P$. 
\end{definition}
We denote the Grundy value of the graph $G$ by $g(G)$. 

\bigskip 

The player who is to move first, wins the game 
if and only if the Grundy value is not 0. Thus, the game 
graph $P$ provides an (exponential) algorithm to decide 
the $P_3$-game on a graph $G$. 

\bigskip 

The Sprague-Grundy theorem deals with products of games. 
Let $\mathcal{G}$ be a collection of games. The product game 
of $\mathcal{G}$ is the game in which each player makes a 
(legal) move in 
one of the games of $\mathcal{G}$. The Sprague-Grundy theorem 
is the following. 

\begin{theorem}
\label{thm SG}
Let $\mathcal{G}$ be a collection of impartial 2-person games. 
Then the Grundy value of the product game is the nim-sum of 
the Grundy values of the games in $\mathcal{G}$. 
\end{theorem}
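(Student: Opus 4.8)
The plan is to argue by well-founded induction on the positions of the product game $\prod\mathcal{G}$. This induction is legitimate: each game in $\mathcal{G}$ is a finite acyclic game (its game graph is a DAG with a unique sink), and $\mathcal{G}$ is finite, so $\prod\mathcal{G}$ is again a finite acyclic game and its Grundy values are well defined. Write $\oplus$ for the nim-sum, that is, bitwise exclusive-or of binary representations. A position of the product game is a tuple $(G_1,\dots,G_k)$ with $G_i$ a position of the $i$-th component; let $g_i$ be the Grundy value of $G_i$. The claim to prove, position by position, is that the Grundy value of $(G_1,\dots,G_k)$ equals $s:=g_1\oplus\cdots\oplus g_k$. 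By the $\mex$-characterization of Grundy values it suffices to show: (i) for every integer $t$ with $0\le t<s$ there is a move from $(G_1,\dots,G_k)$ to a position of Grundy value $t$; and (ii) there is no move from $(G_1,\dots,G_k)$ to a position of Grundy value $s$.

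For (i), fix $t<s$ and put $d:=s\oplus t\neq 0$. Let $h$ be the index of the most significant $1$-bit of $d$; a short argument with binary representations (using $t<s$) shows that bit $h$ of $s$ equals $1$. Since $s=g_1\oplus\cdots\oplus g_k$, some component, say $G_i$, has bit $h$ of $g_i$ equal to $1$, and then $g_i\oplus d<g_i$, because XOR-ing with $d$ clears bit $h$ of $g_i$ while leaving all higher bits unchanged. By the $\mex$-definition of $g_i$ there is a move in $G_i$ to a position $G_i'$ of Grundy value exactly $g_i\oplus d$. Performing this move in the product game and invoking the induction hypothesis on the resulting (strictly smaller) position, its Grundy value is $s\oplus g_i\oplus(g_i\oplus d)=s\oplus d=t$, as wanted.

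For (ii), suppose some move from $(G_1,\dots,G_k)$ reached a position of Grundy value $s$. Such a move changes exactly one component, say $G_i\to G_i'$ with $G_i'$ of Grundy value $g_i'$; by the induction hypothesis the resulting position has Grundy value $s\oplus g_i\oplus g_i'$, and this equals $s$ only if $g_i'=g_i$. But $G_i'$ is a successor of $G_i$, and by the $\mex$-rule no successor of $G_i$ can have Grundy value $g_i$ --- a contradiction. From (i) and (ii), the set of Grundy values of successors of $(G_1,\dots,G_k)$ omits $s$ but contains every smaller nonnegative integer, so its $\mex$ is $s$; this completes the induction. The base case of an all-terminal position is subsumed, since then each $g_i=0$, $s=0$, the position has no successor, and $\mex$ of the empty set is $0$. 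Applying the statement to the tuple of starting positions gives the theorem.

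I expect the one genuinely delicate ingredient to be the binary-representation fact used in step (i): that from $d\neq 0$ one can select a component whose Grundy value strictly decreases under $\oplus d$, with the decrease calibrated so that the resulting nim-sum is exactly the target $t$. Everything else is routine use of the induction hypothesis and the $\mex$-rule. If preferred, one can instead prove the two-game case ($k=2$) first and then lift to arbitrary finite $\mathcal{G}$ by induction on $|\mathcal{G}|$, using the associativity of the product construction and of $\oplus$; but the direct argument above handles all $k$ at once.
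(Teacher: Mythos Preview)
Your argument is correct; it is the standard proof of the Sprague--Grundy theorem via well-founded induction on positions and the two $\mex$ conditions. There is nothing to compare against, however: the paper does not prove Theorem~\ref{thm SG} at all. It is quoted as a classical result (with a reference to Grundy's 1939 paper) and used as a black box in Theorem~\ref{thm splitter}. So your proposal supplies a proof where the paper intentionally omits one.

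One small point worth tightening: you assume $\mathcal{G}$ is finite so that the product game is again a finite acyclic game and the nim-sum is well defined. The paper's phrasing ``a collection'' does not make this explicit, but the only application in the paper (Theorem~\ref{thm splitter}) uses a finite family $\{G_1,\dots,G_t\}$, so the assumption is harmless here. If you want to match the generality of the statement as written, you would need to add a word about why finiteness may be assumed, or restrict the hypothesis accordingly.
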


\section{The $P_3$-game on biconnected chordal bipartite graphs}

Centeno, et al., showed that, in a biconnected chordal graph $G$, if 
$x$ and $y$ are two vertices at distance at most two, then 
$\sigma(\{x,y\})=V(G)$.
The following lemma shows that a similar statement holds in biconnected, 
chordal bipartite graphs. 

\begin{lemma}
\label{lm biconnected}
Let $G=(A,B,E)$ be a biconnected and chordal bipartite graph. Let $U$ be a convex 
set that contains two nonadjacent vertices $x$ and $y$ that are in a $C_4$ of $G$. 
Then 
\[\sigma(\{x,y\})=V(G).\]
\end{lemma}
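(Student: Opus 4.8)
The plan is to determine $\sigma(\{x,y\})$ by an explicit forcing process and to show that, under biconnectedness, this process cannot stop before reaching $V(G)$.

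First I would identify the mandatory part of the closure. As $x$ and $y$ are nonadjacent but lie on a common $C_4$, they have two common neighbours $a$ and $b$ with $a\not\sim b$; since $a$ (respectively $b$) has the two neighbours $x$ and $y$ inside every convex set containing $\{x,y\}$, convexity forces $a,b\in\sigma(\{x,y\})$. Hence $\{x,y,a,b\}\subseteq Z:=\sigma(\{x,y\})$ and $G[\{x,y,a,b\}]$ is a $C_4$, in particular connected. From here $Z$ is obtained by the monotone closure: repeatedly add to the current set any vertex having at least two neighbours already inside. The current set stays connected at every step (each new vertex is adjacent to it), so the process terminates at a convex set, and since every addition was forced it terminates exactly at $Z$.

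Next I would argue by contradiction, assuming $Z\subsetneq V(G)$. Choose a component $C$ of $G-Z$ and set $S=N(C)\subseteq Z$. By convexity every vertex of $C$ has at most one neighbour in $Z$, hence at most one in $S$; also $C$ is connected and meets the rest of $G$ only in $S$, so $C$ is a full component of $G-S$. I would then verify that $S$ is a genuine minimal separator --- equivalently, that $Z\setminus S\neq\es$ and that $S$ has a second full component --- using that $G[Z]$ is connected, contains the $C_4$ on $\{x,y,a,b\}$, and that $G$ is biconnected. With $S$ a minimal separator, Lemma~\ref{lm cb} applies to the pair $(S,C)$: if $S\cap A\neq\es$ some vertex of $C$ is adjacent to all of $S\cap A$, and symmetrically for $S\cap B$. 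Combined with ``at most one neighbour of $C$ in $S$'', this yields $|S\cap A|\leq 1$ and $|S\cap B|\leq 1$, so $|S|\leq 2$. The cases $|S|=0$ and $|S|=1$ are impossible: the former disconnects $G$ and the latter produces a cut-vertex, both contradicting biconnectedness.

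The remaining case $|S|=2$ is the one I expect to be the main obstacle. Here $S=\{s_1,s_2\}$ with $s_1\in A$, $s_2\in B$, and part~(I) of Lemma~\ref{lm cb} ($G[S]$ complete bipartite) forces $s_1\sim s_2$; moreover biconnectedness gives a second full component $C'$ of $G-S$. The point to push through is that the $C_4$ which $Z$ necessarily contains, together with the adjacent witness pair supplied by Lemma~\ref{lm cb}(II) inside each full component, cannot coexist with the forcing process having halted: tracing how these witnesses attach to $\{x,y,a,b\}$ should either pull a new vertex into $Z$ or expose a separator violating Lemma~\ref{lm cb}. Making this last step precise --- deciding which vertices of the $C_4$ may lie in $S$ and propagating the closure accordingly --- is the crux of the proof; everything else is bookkeeping.
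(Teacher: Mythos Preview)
Your approach --- bounding the boundary $S=N(C)$ of a leftover component of $G-\sigma(\{x,y\})$ via Lemma~\ref{lm cb} plus convexity --- differs from the paper's, which works instead with a minimal $x,y$-separator and argues recursively that each component on either side is absorbed. But you are right to flag $|S|=2$ as the obstacle, and that case cannot in fact be closed: the lemma as stated is false. Take the $3$-rung ladder with stiles $u_1u_2u_3$ and $v_1v_2v_3$ and rungs $u_iv_i$. It is biconnected and chordal bipartite; $u_1$ and $v_2$ are nonadjacent and lie on the $C_4$ induced by $\{u_1,u_2,v_2,v_1\}$; yet $\sigma(\{u_1,v_2\})=\{u_1,u_2,v_1,v_2\}$, since each of $u_3,v_3$ has exactly one neighbour in that set. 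In your framework this is precisely the surviving case $S=\{u_2,v_2\}$: one vertex in each colour class, $s_1\sim s_2$, and the Lemma~\ref{lm cb}(II) witnesses inside $C=\{u_3,v_3\}$ each see only one vertex of $S$, so no further forcing is possible.

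The paper's own proof has the same gap in different clothing: its inductive step can land on a separator $S_i$ with only one vertex in the relevant colour class (here $S_1=\{u_2,v_2\}$), at which point ``by Lemma~\ref{lm cb}, and by induction'' no longer produces a vertex with two neighbours already in the closure. The ladder example immediately following the lemma is consistent with this oversight, since it omits sets such as $\{u_1,u_2,v_1,v_2\}$ from its list of convex sets even though they are convex. So the difficulty you isolated is genuine, not bookkeeping; either the hypotheses or the conclusion need adjusting before an argument along either line can be completed.
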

\begin{proof}
Let $S$ be a minimal $x,y$-separator, and let $C_x$ 
and $C_y$ be the components that contain $x$ and $y$. 
Then $S$ contains the common neighbors of $x$ and $y$. 

\medskip 

\noindent
By Lemma~\ref{lm cb}, there exist vertices in $C_x$ and $C_y$ 
that are adjacent to all vertices of $S$ in one color class. It 
follows that there exist vertices in $C_x$ and in $C_y$ that have 
two neighbors in $U$, and so, they are also in $U$. 
In turn, this implies that $S \subseteq U$. 

\medskip 

\noindent
Let $\Omega \subseteq C_x$ be the set of vertices that are adjacent to 
every vertex of one of the two color classes of $S$. 
Let $X_1,\dots,X_t$ be the components of $C_x\setminus \Omega$. 
Then $S_i=N(X_i)$ is a minimal separator, and it is contained in $U$. 
By Lemma~\ref{lm cb}, and by induction, it now follows that $X_i \subset U$. 
This proves that $C_x \subset U$ and, similarly, $C_y \subset U$. 

\medskip 

\noindent
Let $D$ be a component of $G-S$, other than $C_x$ or $C_y$. 
Then $N(D)$ is a minimal separator. Since $G$ is biconnected, 
\[N(D) \subseteq S \quad \text{and}\quad |N(D) \cap S| \geq 2.\] 
By the same argument as above, $D \subset U$. 

\medskip 

\noindent
This proves the lemma.
\qed\end{proof}

\bigskip 

\begin{example}
Consider a ladder $L$. Notice that every convex set is either 
\begin{enumerate}[\rm (a)]
\item $\es$, or $V(L)$, or a single vertex, or 
\item a rung of $L$, or 
\item a connected subpath of a stile (stringer) of $L$.
\end{enumerate}
\end{example}

\bigskip 

\begin{theorem}
\label{thm biconnected}
Assume $G$ is a biconnected, chordal bipartite graph with 
at least two vertices. Then the second player to move has a 
winning strategy. 
\end{theorem}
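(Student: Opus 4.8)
The plan is to exhibit an explicit winning strategy for the second player that forces $U=V$ after just two moves; then the first player, facing the playground $V$ on her second turn, loses. First I would settle the only degenerate case, $G=K_2$: whatever endpoint $x$ the first player picks, the playground becomes $\{x\}$ (singletons are convex), the second player answers with the other endpoint, the playground becomes $V$, and the first player loses.

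So assume $|V(G)|\ge 3$. The crucial structural observation I would isolate is that every vertex of $G$ lies on an induced $4$-cycle. Since $G$ is biconnected and has at least three vertices, every vertex $x$ lies on a cycle; choose a shortest cycle $C$ through $x$. If $C$ had a chord, that chord would split $C$ into two strictly shorter cycles, at least one of which still passes through $x$, contradicting the minimality of $C$; hence $C$ is chordless. As $G$ is chordal bipartite, $|C|\le 4$, and as $G$ is bipartite, $|C|\ne 3$, so $|C|=4$. (In a bipartite graph every $4$-cycle is chordless, so the distinction between a $C_4$ and an induced $C_4$ is immaterial here.)

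The strategy now follows. Suppose the first player opens with a vertex $x$, so the playground is $\{x\}$. Fix an induced $4$-cycle $x,p,y,q$ through $x$. Then $x$ and $y$ are nonadjacent, lie on a common $C_4$, and have the common neighbor $p$, so $d(x,y)=2$ and $y$ is a legal reply from the playground $\{x\}$. After the second player plays $y$ the playground becomes $\sigma(\{x\}\cup\{y\})=\sigma(\{x,y\})$, which equals $V(G)$ by Lemma~\ref{lm biconnected}. Thus the first player is confronted with $U=V$ and loses, which proves the theorem. The one place that needs an argument rather than a one-line check is the structural claim that every vertex lies on an induced $C_4$; the rest is a direct appeal to Lemma~\ref{lm biconnected}.
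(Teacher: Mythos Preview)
Your proof is correct and follows essentially the same route as the paper's: handle $K_2$ separately, then have the second player answer the first move $x$ with the antipodal vertex of a $C_4$ through $x$ and invoke Lemma~\ref{lm biconnected}. The only difference is cosmetic: the paper asserts ``$s$ is in a $C_4$, otherwise $G$ would have a cutvertex'' without further argument, whereas you supply an explicit shortest-cycle justification for that claim.
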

\begin{proof}
If $G$ is an edge, the second player wins the $P_3$-convex game. 
Assume $G$ has more than 2 vertices. 
Assume the first move labels a vertex $s$. Notice that 
$s$ is in a $C_4$, otherwise $G$ would have a cutvertex. 
The player to move, chooses a vertex $s^{\prime}$ which is not adjacent 
to $s$ and which is in a $C_4$ together with $s$. Then, 
by~Lemma~\ref{lm biconnected}, the second move changes the playground 
into $V(G)$, which ends the game. 
\qed\end{proof}

\begin{remark}
Similarly, when $s$ is a pendant vertex and $G-x$ is a
biconnected and chordal bipartite graph with at least 
two vertices, 
then the player who 
is first to move has a winning strategy. When $G$ is $P_3$, 
then the winning move is to play the midpoint. 
Otherwise, when $G$ has at least 4 vertices, 
the player who moves first labels the pendant vertex $s$. His 
opponent either adds a vertex or an edge of the biconnected 
component to the playground and, since the vertex or edge is 
in a $C_4$, the player who made the first move can then end the game. 
\end{remark}

\section{Splitters}

Let $G$ be a connected chordal bipartite graph. 
A generalization of the $P_3$-game is, where the initial playground is 
some (arbitrary) convex set $U$, instead of $\es$. 
We denote the Grundy number of this game 
by $g^{\ast}(U)$, or by $g^{\ast}(G,U)$, when 
the graph $G$ is not clear from the context. 
Then we have, for the Grundy value $g(G)$ of $G$, 
\[g(G)=g^{\ast}(\es)=g^{\ast}(G,\es).\]

\bigskip 

\begin{definition}
A \underline{splitter} is a playground that contains a minimal 
separator of $G$. 
\end{definition}

Let $S$ be a minimal separator, and assume $S \subseteq U$, for some 
playground $U$, 
and let $C_1,\dots,C_t$ 
be the components of $G-S$. 
Denote 
\[\Bar{C}_i=C_i \cup N(C_i), i \in \{\;1,\;\dots,t\;\}.\] 
Then each player, when it is his move, plays a vertex in 
one of the components $C_i$, that is, he plays a move in one 
of the games $G_i=G[\Bar{C}_i]$ with playground $U_i=U \cap V(G_i)$. 
By Theorem~\ref{thm SG}, this prove the following theorem.

\begin{theorem}
\label{thm splitter}
Let $G$ be a connected chordal bipartite graph. 
Let $U$ be a splitter of the $P_3$-game played on $G$. Then   
\[g^{\ast}(G,U)= \nimsum \;\{\;g^{\ast}(G_i,U_i)\;|\; 
i\in \{\;1,\;\dots,\;t\;\}\;\}.\]
\end{theorem}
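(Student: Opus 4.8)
The plan is to decompose the game with playground $U$ (a splitter) into the "parallel" games obtained by restricting attention to each component $C_i$ of $G-S$, and then appeal to the Sprague–Grundy theorem. First I would verify the key structural fact: once $S \subseteq U$, every legal move of the $P_3$-game on $(G,U)$ consists of playing a vertex $x$ inside exactly one component $C_i$, and such a move affects only the piece of the playground living in $\Bar{C}_i = C_i \cup N(C_i)$. For the first part, if $x \in V \setminus U$ is at distance at most two from $U$, then since $S$ separates the $C_i$ and $S \subseteq U$, the vertex $x$ lies in a unique $C_i$ (it cannot lie in $S$, as $S \subseteq U$). For the second part I would argue that $\sigma_G(U \cup \{x\})$, intersected with $V(G_i)$, equals $\sigma_{G_i}(U_i \cup \{x\})$, and that for $j \neq i$ the $j$-th piece $U_j = U \cap V(G_j)$ is unchanged. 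The containment $N(C_i) \subseteq S \subseteq U$ is what makes this work: the convex closure cannot "escape" from $\Bar{C}_i$ into another component, because reaching a vertex of $C_j$ would require a vertex of $C_j$ with two neighbors in the current playground, and all of its neighbors in $U$ would have to lie in $N(C_j) \subseteq S$, while $S$ was already in $U$ to begin with — so no new vertex of $C_j$ is captured. This last point is the step I expect to be the main obstacle, because one must be careful that the convex closure operator in $G$ really does act "coordinatewise" on the components once $S$ is swallowed; in particular one needs that $G_i[U_i]$ stays connected, which follows because $\Bar{C}_i$ induces a connected graph and $U$ is convex in $G$.

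Granting the structural claim, the game $(G,U)$ is exactly the product game of the games $(G_1,U_1),\dots,(G_t,U_t)$ in the sense of Theorem~\ref{thm SG}: a position of $(G,U)$ reachable from $U$ is a tuple $(U_1,\dots,U_t)$ of positions of the factor games, a move changes exactly one coordinate according to the rules of that factor game, and the terminal position $U = V(G)$ corresponds precisely to all factors being terminal ($U_i = V(G_i)$ for every $i$). Here I should check the one delicate bookkeeping point about losing positions: in the $P_3$-game the player to move at $U=V$ loses, which matches the convention that a position with no moves has Grundy value $0$; for the factor games $(G_i,U_i)$ the same convention applies, and the product has no moves precisely when each factor has no moves, i.e. when $U_i = V(G_i)$ for all $i$, i.e. $U = V(G)$. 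So the terminal conditions are consistent, and no "normal play" subtlety intervenes.

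Finally, applying Theorem~\ref{thm SG} to the collection $\mathcal{G} = \{(G_i,U_i) : i \in \{1,\dots,t\}\}$ of impartial two-person games yields
\[
g^{\ast}(G,U) = \nimsum \;\{\;g^{\ast}(G_i,U_i)\;|\; i \in \{\;1,\;\dots,\;t\;\}\;\},
\]
which is the assertion of the theorem. In writing this up I would state the structural claim as a short sublemma (moves live in a single component; closure acts componentwise), prove it using Lemma~\ref{lm cb} together with the convexity definition and the hypothesis $S \subseteq U$, and then invoke Theorem~\ref{thm SG} as a black box. The only real work is the sublemma; the Sprague–Grundy application is then immediate.
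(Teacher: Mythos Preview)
Your proposal is correct and follows the same approach as the paper: observe that a move from a splitter lies in a single component $C_i$ and affects only that component's piece of the playground, so the game is the product of the $(G_i,U_i)$ and Theorem~\ref{thm SG} applies. The paper's argument is much terser (essentially one sentence before the theorem statement), and note that Lemma~\ref{lm cb} is not actually needed for your structural sublemma---the componentwise behaviour of the closure holds for any graph once a separator is contained in the playground, using only the convexity definition.
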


\begin{remark}
Notice that the definition of $\Bar{C}_i$ guarantees 
that any move made in the product game is a legal move 
in the $P_3$-game on $G$.  
Notice also the necessity of the condition 
that the separator $S$ is part of the current playground; 
otherwise, a move within $S$ would be a move in all 
games $G_i$, which is not allowed in the product game. 
\end{remark}

\section{Deciding the $P_3$-game on chordal bipartite graphs}

Let $G$ be a connected, chordal bipartite graph. 
Let $P$ be the game graph of the $P_3$-game played on $G$. 
Let $P^{\prime}$ be the labeled digraph obtained from $P$ as follows. 
Let $U$ be a playground. Let $H(U)$ be the graph obtained from $G$ by 
removing those vertices $x \in V(G)$ that satisfy 
\[\sigma(U\cup\{x\})=V.\] 

\begin{definition}
The \underline{augmented game graph} 
$P^{\ast}$ is the labeled digraph obtained from $P^{\prime}$, 
by adding an arc from each sink in $P^{\prime}$ to a new sink node $V$. 
\end{definition}

\subsubsection{The augmented game.}

The augmented game graph $P^{\ast}$ represents the following `augmented' game. 
When it is a player's move, and when the playground is a convex set $U$, 
then he chooses a vertex from $H(U)$ which is at distance at most 2 from $U$. 
When he cannot make a move, he loses the game. 

\bigskip 

\begin{lemma}
Assume that $G$ is biconnected, and chordal bipartite, and 
assume that $G$ has at least 2 vertices.  
The Grundy value of $G$ satisfies 
\[g(G)=g^{\ast}(G),\]
where $g^{\ast}(G)$ is the Grundy value of the augmented game. 
\end{lemma}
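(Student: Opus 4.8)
The plan is to show the two games are really the same game, just with a cosmetically different endgame. In the original $P_3$-game on a biconnected chordal bipartite graph $G$ with at least two vertices, Theorem~\ref{thm biconnected} tells us that whenever a player is confronted with a nonempty playground $U \neq V$, he can move straight to $V$: either $U$ is a single vertex $s$ lying in a $C_4$ (which it must, since $G$ is biconnected), and he plays a nonadjacent $C_4$-partner of $s$; or $U$ already contains two nonadjacent vertices in a $C_4$ — and in a biconnected chordal bipartite graph with more than one vertex this is automatic unless $U$ is a single vertex — and then $\sigma(U)=V$ already, so in fact $U$ is never an intermediate state. Thus the only playgrounds the original game ever sits at are $\varnothing$ and $V$, together with the single-vertex sets reachable in one move from $\varnothing$, and from any of those a legal move to $V$ exists.

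First I would make the above precise: characterize the reachable playgrounds of $P$ as exactly $\varnothing$, the singletons $\{s\}$ with $s$ not a cutvertex (all of them, by biconnectedness) with $\sigma(\{s\})=\{s\}$, and $V$. From each singleton, Lemma~\ref{lm biconnected} guarantees a move to $V$. Hence $g(G)=g^{\ast}(\varnothing)$ where every successor of $\varnothing$ has Grundy value $\mex\{0\}=1$ unless the game is the single edge (no singleton has a partner, and $\sigma(\{s\})=V$ directly, giving successor $0$), so $g(G)=0$ in the edge case and $g(G)=\mex\{1,\dots,1\}=0$ when $|V(G)|\ge 3$ — matching Theorem~\ref{thm biconnected}.

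Next I would do the same bookkeeping for the augmented game. By definition $H(U)$ deletes exactly those $x$ with $\sigma(U\cup\{x\})=V$; so from $\varnothing$ a move in the augmented game picks $x$ with $\sigma(\{x\})\neq V$, landing at a singleton $\{x\}$, and from a singleton $\{s\}$ every $x$ that would finish the original game is now forbidden, so the player has \emph{no} move in $P'$ — i.e. $\{s\}$ is a sink of $P'$ — and in $P^{\ast}$ it gets the single added arc to the new sink $V$. Also $\varnothing$ itself is a sink of $P'$ exactly when $G$ is an edge, and then the added arc sends it straight to $V$. Compute: in $P^{\ast}$, $V$ has label $0$; each singleton $\{s\}$ has out-neighbor only $V$, hence label $\mex\{0\}=1$; and $\varnothing$ has out-neighbors the singletons (label $1$ each) when $|V(G)|\ge 3$, giving $g^{\ast}(G)=\mex\{1,\dots,1\}=0$, or out-neighbor $V$ directly when $G$ is an edge, giving $g^{\ast}(G)=\mex\{0\}=1$. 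Comparing, $g(G)=g^{\ast}(G)$ in both cases.

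I do not expect a genuine obstacle here; the only subtlety — and the step I would write most carefully — is the claim that in a biconnected chordal bipartite graph with at least two vertices, \emph{every} vertex $s$ lies in some $C_4$ and has a non-neighbor inside that $C_4$, so that Lemma~\ref{lm biconnected} actually applies to every singleton. This is exactly the argument used in the proof of Theorem~\ref{thm biconnected} (a vertex in no $C_4$ of a bipartite graph forces a cutvertex), so it can simply be invoked. Everything else is a finite $\mex$ computation on a three-layer DAG, and the real content is the observation that $H(U)$ was \emph{designed} to turn ``I can win immediately'' into ``I am stuck,'' which is a Grundy-value-preserving relabeling when every non-terminal position has that immediate-win property.
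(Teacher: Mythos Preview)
Your argument rests on the claim that in a biconnected chordal bipartite graph the only convex sets are $\varnothing$, singletons, and $V$. This is false, and the paper's own ladder example (immediately after Lemma~\ref{lm biconnected}) already shows it: every rung is a convex edge, and every connected subpath of a stile is convex. More concretely, in any bipartite graph an edge $\{x,y\}$ is convex, since a third vertex can be adjacent to at most one of $x,y$; so from a singleton $\{s\}$ the move to an adjacent vertex $t$ lands at $\{s,t\}\neq V$, and $\{s\}$ is \emph{not} a sink of $P'$. Your assertion that ``$U$ already contains two nonadjacent vertices in a $C_4$ \ldots\ unless $U$ is a single vertex'' breaks for edges (the two vertices are adjacent) and for stile-paths (the endpoints are nonadjacent but share only one common neighbour, so they are not in a $C_4$). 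Consequently the three-layer DAG you compute on does not exist, and your $\mex$ computations are for the wrong game graph.

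This error also surfaces in your edge case: when $G=K_2$, the successors of $\varnothing$ in $P'$ are the two singletons (neither equals $V$), so $\varnothing$ is not a sink of $P'$, and one computes $g^{\ast}(G)=\mex\{1,1\}=0$, not $1$; your computation would have contradicted the lemma. The paper's proof does not attempt to enumerate playgrounds. Instead it argues position-by-position: every intermediate playground $U$ (edge, path, or otherwise) contains some vertex $s$, that $s$ lies in a $C_4$, and the opposite vertex $s'$ of that $C_4$ gives a legal move with $\sigma(U\cup\{s'\})=V$. Thus every $U\neq\varnothing,V$ has an arc to $V$ in $P$, and the paper relates $g(U)$ to $g^{\ast}(U)$ via~\eqref{eq1} rather than by classifying all $U$. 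To repair your approach you would need to handle the full lattice of convex sets (which for a ladder already has quadratically many members), and at that point you are essentially redoing the paper's inductive comparison rather than a finite case check.
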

\begin{proof}
Each move in the $P_3$-game on $G$ is a move in $P$, 
\[A \rightarrow B,\]
where $A$ and $B$ are convex sets. That is, either it is a move 
\[A \rightarrow V,\] 
or else it is a move in $P^{\prime}$. 

\medskip 

\noindent
It follows that for each convex set $U$ in $P$, the Grundy value is 
\begin{equation}
\label{eq1}
g(U)= 
\begin{cases}
\mex \;\{\; 0,\; g^{\ast}(U) \;\}, & \text{if $U \rightarrow V$ in $P$,}\\
g^{\ast}(U) & \text{otherwise,}
\end{cases}
\end{equation}
where $g^{\ast}(U)$ is the Grundy value of $U$ in the augmented game. 
By Lemma~\vref{lm biconnected}, since $G$ is biconnected, when $U \neq V$
and $U \neq \es$, there 
exists a vertex $x \in V \setminus U$ satisfying 
\[\sigma(U \cup \{x\})=V.\] 
This implies that, unless $U=V$ or $U=\es$, there is an 
arc $U \rightarrow V$, and so,  
\[g^{\ast}(U)= \mex \;\{0,\;g^{\ast}(U)\;\}.\]

\medskip 

\noindent
Notice that $g(\es)=g^{\ast}(\es)$, since no vertex played as 
an initial move ends the game (since $G$ has at least two vertices). 
Finally, by definition, 
\[g(V)=g^{\ast}(V)=0.\]

\medskip 

\noindent
This proves the lemma. 
\qed\end{proof}

\bigskip 

\begin{theorem}
There exists a polynomial-time algorithm to compute the Grundy value 
of the $P_3$-game on chordal bipartite graphs. 
\end{theorem}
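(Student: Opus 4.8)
The strategy is to recurse on the block-cut structure of $G$, using the two already-proven structural facts: that on a biconnected chordal bipartite graph the augmented game coincides with the ordinary game (previous lemma), and that once a playground is a splitter the game decomposes as a nim-sum over the components of $G-S$ (Theorem~\ref{thm splitter}). The key observation is that after a \emph{bounded} number of moves — in fact after two moves that land inside a single biconnected block, or one move involving a cut vertex — the current playground either equals $V$ or becomes a splitter, by Lemma~\ref{lm biconnected} and Lemma~\ref{lm cb}. So the set of ``reachable playgrounds that are not yet splitters'' is small: it consists of $\es$, single vertices, and a few convex sets that lie entirely inside one block plus its attachment vertices. I would enumerate these directly.

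\textbf{Step 1 (normalize the playground space).}
First I would argue that every convex set $U$ with $\es \neq U \neq V$ that is reachable and is \emph{not} a splitter is one of polynomially many ``small'' sets: a convex subset contained in a single block $B$ of $G$ together with at most the cut vertices of $B$, and moreover (by the ladder-type analysis in the Example and by Lemma~\ref{lm cb}(II)) such a $U$ inside a biconnected piece is either a single vertex, a single edge, or a set that already contains a minimal separator of that block. Hence the interesting, genuinely ``non-decomposed'' states number $O(\mathrm{poly}(n))$, and each can be listed by examining each block and each vertex/edge of it.

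\textbf{Step 2 (evaluate the decomposed states recursively).}
For a splitter $U$ with separator $S \subseteq U$ and components $C_1,\dots,C_t$ of $G-S$, Theorem~\ref{thm splitter} gives $g^{\ast}(G,U)=\nimsum\{g^{\ast}(G_i,U_i)\}$. Each $G_i=G[\bar C_i]$ is again connected chordal bipartite and strictly smaller, with an induced initial playground $U_i$, so I would recurse. When a $G_i$ is biconnected, the previous lemma reduces its augmented game to a simple $\mex$ computation over the two outcomes $\{0, g^{\ast}(G_i)\}$; when $G_i$ has cut vertices I would further split along an innermost block. The recursion tree has depth $O(n)$ and polynomial branching because each minimal separator and the attached components are computable in polynomial time, and the total number of distinct $(G_i,U_i)$ pairs encountered is polynomial since each is determined by a block and a bounded amount of boundary data.

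\textbf{Step 3 (assemble and bound).}
With values for all small states in hand, I would compute $g(G)=g^{\ast}(G,\es)$ by taking the $\mex$ over the successors of $\es$ (the single-vertex playgrounds), each of whose values was obtained in Steps 1--2. The running time is polynomial: $O(\mathrm{poly}(n))$ states, each requiring a $\mex$ or $\nimsum$ over $O(n)$ already-computed values, plus polynomial-time separator/component computations.

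\textbf{Main obstacle.}
The delicate point is Step 1: proving that the reachable-but-not-yet-decomposed playgrounds really are confined to a single block and are so restricted in shape. This needs a careful case analysis — when the first move is a cut vertex, when it is an internal vertex of a block, and what $\sigma(U\cup\{x\})$ can look like after the second move — leaning on Lemma~\ref{lm cb} to show that any convex set spanning two vertices of a common $C_4$ in a block already swallows a minimal separator, hence is a splitter (or is all of $V$ when the block is the whole graph). Once that confinement is established, the nim-sum decomposition and the biconnected-case lemma do the rest, and the polynomial bound falls out of the polynomial number of blocks, separators, and small convex sets.
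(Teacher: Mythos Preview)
Your Step~1 is where the argument breaks. The claimed trichotomy --- that a non-splitter convex set inside a biconnected block is ``a single vertex, a single edge, or a set that already contains a minimal separator of that block'' --- is false. Take the ladder from the paper's own Example: any subpath $\{a_i,a_{i+1},\dots,a_j\}$ of one stile is convex, yet it contains no minimal separator of the ladder (every minimal separator of the ladder meets both stiles). So a stile path of length two or more is a reachable playground that contains an induced $P_3$, is not $V$, is not a single vertex or edge, and is \emph{not} a splitter of $G$. Your ``after two moves inside a block the playground is $V$ or a splitter'' claim fails for exactly the same reason: playing $a_1,a_2,a_3$ in order never touches the other stile, never produces a $C_4$ pair to which Lemma~\ref{lm biconnected} applies, and never swallows a separator. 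For the ladder these stile-paths are still only $O(n^2)$ many, but you give no argument that in a general biconnected chordal bipartite graph the non-splitter convex sets are polynomially bounded, and your block--cut recursion cannot start until that is established.

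The idea you are missing is precisely how the paper uses the augmented game. You invoke the lemma $g(G)=g^{\ast}(G)$ only as a bookkeeping identity on biconnected pieces, but its real role is structural: in the augmented game one works in the graph $H(U)$ obtained by deleting every vertex whose play would end the game. The paper shows that whenever the playground $U$ contains an induced $P_3$ $[x,y,z]$, the deletions forced in passing to $H(U)$ create a cutvertex (namely $z$), so that $U$ is a splitter \emph{of $H(U)$} even when it is not a splitter of $G$. This is exactly what handles the ladder stile-paths: removing the opposite-stile vertices (each of which ends the game) disconnects $H(U)$ at $z$. With that lemma in hand, every node of $P^{\ast}$ either has a $P_3$ (hence is a splitter, and there are at most $n^3$ minimal ones) or has no $P_3$ (hence is determined by at most two vertices), giving the polynomial bound on $|P^{\ast}|$. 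Your outline never produces this cutvertex-in-$H(U)$ step, and without it the recursion has no base.
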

\begin{proof}
We may assume that $G$ is connected. 
Consider a playground $U$ that contains an induced 
$P_3$ in $G$, say $[x,y,z]$. 
We claim that $U$ is a splitter. We may assume that none of $x$, $y$ 
or $z$ is a cutvertex in $G$, otherwise we are done. Thus $\{x,y,z\}$ 
is contained in a biconnected component of $G$. We may assume also 
that $U \neq V$. 

\medskip 

\noindent
Let $C_y$ be the component of $G-N[x]$ that contains $y$ and let
\[S=N(C_y).\] 
Then $S \subseteq N(x)$, and so $S$ is an independent set (since $G$ is bipartite). 
By Lemma~\vref{lm cb}, there exists a nonempty set of vertices 
$Y^{\prime} \subseteq C_y$ which are adjacent to all vertices of $S$. 
We claim that $Y^{\prime} \cap H(U) =\es$, where $H(U)$ is  
the label of $U$ in the augmented 
game graph $P^{\ast}$. 

\medskip 

\noindent
To see this, first notice that 
$y^{\prime} \notin U$, for $y^{\prime} \in Y^{\prime}$. Otherwise, since $U$ is contained in a 
biconnected component of $G$, there exists a vertex $q \in S \setminus \{z\}$, 
which is a common neighbor of $x$ and $y^{\prime}$. This implies that 
$\{x,z,y^{\prime},q\}$ induces a $C_4$. By Lemma~\vref{lm biconnected},
$y^{\prime} \in U$ implies that $\sigma(U)=U=V$, which is a contradiction. 

\medskip 

\noindent
Every vertex of $Y^{\prime}$ represents a 
\underline{legal} move, since they are adjacent 
to $z \in U$. 
Since $Y^{\prime}$ contains legal moves $y^{\prime}$ for which 
\[\sigma(U\cup \{y^{\prime}\})=V,\] 
we have, by definition of the augmented game graph, 
$Y^{\prime} \cap H(U)=\es$. 

\medskip 

\noindent
In the augmented game, the vertices of $Y^{\prime}$ are removed from the 
graph (recall that these represent moves that point to $V$). 
Let $G^{\prime}=G-Y^{\prime}$. 
Let $\{D_i\}$ represent the set of components of $G[C_y]-Y^{\prime}$. 
First consider a component $D_i$ that contains a vertex of $U$. 
Let $S_i=N(D_i)$. Then $S_i \subseteq S$. As long as $|S_i|>0$, by the 
argument above, we find new vertices in $D_i$ that are not in $H(U)$.  
Thus, after removal of all vertices that are not in $H(U)$, we have that 
all components $D_i$ that contain a vertex of $U$, satisfy $N(D_i)=\{z\}$.%
\footnote{Notice that the same argument applies to components of $G-N[x]$ 
that have no vertices of $U$, but are adjacent to $z$. After removal of 
vertices that are not in $H(U)$,  these components split also off  
as a component of $G-\{z\}$.} 
This proves that $H(U)$ has a cutvertex, and so, in the augmented game graph,  
$U$ is a splitter. 

\medskip 

\noindent
Notice that there are at most $n^3$ minimal splitters, 
since it is bounded from above by the number of  induced $P_3$s in $G$. 
Each connected subgraph of $P^{\ast}$, without 
splitters has at most $O(n^2)$ nodes, since otherwise it contains a splitter. 
Therefore, the number of nodes in the augmented game is $O(n^5)$. 
It follows that the computation of the augmented game graph 
can be carried out in 
polynomial time. The Grundy values can be computed using 
Theorem~\ref{thm splitter} (ie, the $\nimsum$-operator, and~\eqref{eq1} (which 
relates the Grundy values in $P$ and $P^{\ast}$), and the $\mex$-operator. 

\medskip 

\noindent
This proves the theorem. 
\qed\end{proof}

\section{Concluding remark}

In this paper we introduced a new technique, dubbed `splitters,' 
which is used for the computation of the Grundy numbers of certain 
games on graphs. The technique attempts to reduce the game graph, 
by using splitters, 
to an equivalent game graph which has polynomial size. In the case of 
the $P_3$-game on chordal bipartite graphs, this turned out to be 
successful. 
At the moment we are investigating for which classes of graphs, and for 
which games, this technique is applicable.  It would be nice to have a 
characterization of the classes of graphs, say with a polynomial number 
of separators, for which the $P_3$-game is solvable in polynomial time. 

\section{Acknowledgement}

Ton~Kloks thanks the Department of Computer Science at National Tsing Hua 
University for their kind hospitality and support.

\end{document}